\documentclass[mnsc,nonblindrev]{informs3-ssrn}
\RequirePackage{tgtermes}
\RequirePackage{newtxtext}
\RequirePackage{newtxmath}
\RequirePackage{bm}
\RequirePackage{endnotes}

\OneAndAHalfSpacedXI 

\usepackage[utf8]{inputenc}

\usepackage{algorithm}
\usepackage{algpseudocode}
\usepackage{amsmath}
\usepackage{amssymb}
\usepackage{subcaption}
\usepackage{array}
\usepackage{todonotes}
\usepackage{longtable}
\usepackage{tikz}
\usepackage{pgfplots}
\usepackage{titlesec}

\titlespacing*{\section}
{0pt}          
{1.5ex plus 1ex minus .2ex} 
{1.5ex plus .2ex}           

\usepackage{xcolor}

\usepackage{natbib}
 \bibpunct[, ]{(}{)}{,}{a}{}{,}%
 \def\bibfont{\small}%
\let\cite\citep

\EquationsNumberedThrough    

\TheoremsNumberedThrough     
\ECRepeatTheorems  %

\MANUSCRIPTNO{}

\usepackage{algorithm}
\usepackage{algpseudocode}
\usepackage{multirow}
\usepackage{multicol}
\usepackage{mathtools}

\usepackage[utf8]{inputenc}
\usepackage{pgfplots}
\usepgfplotslibrary{groupplots,dateplot}
\usetikzlibrary{patterns,shapes.arrows}
\pgfplotsset{compat=newest}
\pgfplotsset{scaled y ticks=false}
\usepackage{caption}
\usepackage{subcaption}
\usepackage{xcolor}
\usepackage{soul}
\usepackage{todonotes}
\usepgfplotslibrary{groupplots,dateplot}
\usetikzlibrary{patterns,shapes.arrows}
\pgfplotsset{compat=newest}

\usepackage{comment}
\usepackage{booktabs} 

\usepackage{url}
\usepackage[capitalize,noabbrev]{cleveref}

\usepackage{cleveref}
\crefname{problem}{Problem}{Problems}
\Crefname{problem}{Problem}{Problems}


\usepackage{float}

\begin{document}



\RUNTITLE{Budget Generalized Fractional Influence Maximization}

\TITLE{Optimizing Marketing Budgets for Social Influence: A Generalized Fractional Incentives Framework}

\ARTICLEAUTHORS{%
\AUTHOR{Abhishek K. Umrawal\footnote{Abhishek K. Umrawal is the corresponding author.}}
\AFF{
University of Illinois Urbana-Champaign, Urbana, IL 61801, USA, \EMAIL{aumrawal@illinois.edu}}
\AUTHOR{Akhil Bhimaraju}
\AFF{
University of Illinois Urbana-Champaign, Urbana, IL 61801, USA, \EMAIL{akhilb3@illinois.edu}} 
\AUTHOR{Eliot W. Robson
}
\AFF{Narmi, New York, NY 10016, USA, \EMAIL{eliot.robson24@gmail.com}}
\AUTHOR{Lav R. Varshney
}
\AFF{
Stony Brook University, Stony Brook, NY 11794 USA, \EMAIL{lav.varshney@stonybrook.edu}}
} 

\ABSTRACT{%
Marketing managers routinely allocate limited budgets across individuals in social networks whose responsiveness and influence vary substantially. We study the problem of Budgeted Generalized Fractional Influence Maximization (BGFIM), which captures the allocation of continuous levels of marketing spend across users in the presence of heterogeneous response and network spillovers. In this framework, adoption probabilities are governed by user-specific response functions that capture both baseline propensity and marginal responsiveness, while incentive costs vary across users. We propose a scalable solution approach based on a discrete greedy procedure with provable approximation guarantees, enabling tractable optimization under realistic budget constraints. Our formulation generalizes prior influence maximization models by allowing heterogeneous response and cost structures, providing a more flexible representation of targeted marketing interventions. Using experiments on multiple large-scale networks, we show that continuous allocation strategies can outperform standard heuristics and improve influence outcomes under heterogeneous conditions, and we illustrate how the framework can support profit-oriented decision making by identifying budget levels that balance marginal returns and costs. Overall, our results highlight the value of fine-grained budget allocation in networked marketing environments and offer a scalable approach for data-driven marketing resource allocation.
}%


\KEYWORDS{marketing resource allocation, social networks, influence maximization, heterogeneous response, targeted incentives, information diffusion, network effects, approximation algorithms}

\maketitle


\section{Introduction}
\label{sec:introduction}

The rapid proliferation of online social networks has transformed the landscape of modern communication and commerce, positioning digital platforms as central hubs for consumer engagement and product promotion \citep{aral2013social, holt2016branding, lobel2017customer, rehman2022imc}. This evolution has catalyzed the emergence of a multi-billion-dollar ``influencer economy,'' wherein firms strategically allocate substantial marketing budgets to incentivize individuals—often social media influencers—to adopt and advocate for products, services, or ideas \citep{gupta2008allocating, aral2020social, pan2025influencer}. Influencer marketing has proven to be a powerful tool for shaping consumer attitudes and purchase intentions, with effectiveness driven by factors such as influencer credibility, social identity, and content value \citep{vrontis2021social, pan2025influencer}. A growing body of research in marketing and operations has examined the role of networks in shaping adoption and optimal targeting decisions, highlighting the importance of accounting for both individual responsiveness and network spillovers in designing effective campaigns \citep{goldenberg2001using, libai2013customer, mallipeddi2022framework, kanuri2018scheduling}. These patterns are also reflected in practitioner reports and industry examples, which illustrate how influencer campaigns often combine creator--brand fit, influencer seeding, and targeted incentives to drive engagement and product adoption \citep{getsaral2023graza,naceva2026campaigns,modash2025campaigns}. In this complex and rapidly evolving environment, the efficient allocation of finite marketing resources across individuals in a network has become a pressing managerial challenge, demanding data-driven approaches to maximize campaign reach and return on investment \citep{gupta2008allocating, lang2022targeted}.

To promote their products and services in such networked environments, firms must decide which individuals to target and how to allocate incentives so as to maximize downstream diffusion \citep{kempe2003maximizing, aral2012influential}. This decision problem has been studied in the literature on influence maximization, which models how the selection of initial adopters can drive cascading adoption through social networks \citep{kempe2003maximizing, goyal2011data, borgs2014maximizing}. In many practical settings, firms provide monetary incentives or discounts to encourage adoption and advocacy, with the likelihood of engagement increasing with the level of the incentive. However, marketing budgets are inherently limited, requiring firms to balance the potential influence of individuals against the costs of targeting them. This trade-off between influence potential and cost has motivated a substantial body of work on budget-constrained influence maximization \citep{goyal2011data, borgs2014maximizing}. A common assumption in this literature is that incentives are binary—individuals are either fully targeted or not at all—which fails to capture the more nuanced strategies observed in modern marketing campaigns, where firms frequently employ partial or personalized incentives \citep{Vondrak2008, calinescu2011maximizing, yang2016continuous, chen2020scalable}.

While allowing fractional or continuous incentives provides a more flexible alternative to binary targeting, existing approaches continue to rely on restrictive modeling assumptions that limit their applicability in marketing settings. In particular, prior work often assumes a direct equivalence between the level of an incentive and the probability of user adoption \citep{umrawal2023fractional,umrawal2026jair}, effectively treating incentives and behavioral responses as interchangeable. This assumption is difficult to justify in practice, as it ignores substantial heterogeneity in consumer behavior: some individuals may adopt with minimal incentives due to inherent affinity, while others may remain unresponsive even under substantial incentives. For example, a dedicated early adopter might activate with a 5\% discount, while a brand-loyal customer of a competing product may remain unresponsive even under much larger incentives (e.g., 20\%). These behavioral differences are well-documented in the marketing literature, which emphasizes the importance of consumer segmentation and differential responsiveness to promotional stimuli \citep{goldenberg2001talk, libai2013customer, aral2013social}. Furthermore, these models do not adequately account for variation in the cost of targeting different individuals, which is a central feature of real-world marketing campaigns. As emphasized in the marketing literature, differences in baseline preferences, responsiveness to incentives, and economic costs play a critical role in determining the effectiveness of targeted interventions \citep{goldenberg2001using, libai2013customer}. These limitations highlight the need for models that can jointly capture heterogeneous responses to incentives and heterogeneous costs within a unified framework for budget allocation in networks.

To address these limitations, we study the problem of Budgeted Generalized Fractional Influence Maximization (BGFIM), which models the allocation of marketing resources across individuals in a network in the presence of heterogeneous response and network spillovers. In our framework, firms assign continuous levels of incentives to users subject to a budget constraint, and adoption probabilities are governed by user-specific response functions that capture both baseline propensity and marginal responsiveness. This formulation separates the economic decision of how much incentive to allocate from the behavioral response it induces, allowing for a more realistic representation of consumer heterogeneity. In addition, the model incorporates heterogeneous costs of targeting individuals, reflecting differences in influencer compensation and the resources required to engage different individuals.
We propose a scalable solution approach based on a discrete greedy procedure with provable approximation guarantees, enabling tractable optimization in large-scale networks. We evaluate the performance of the approach through experiments on multiple real-world networks and further illustrate how the framework can be used to support profit-oriented decision making by identifying budget levels that balance marginal returns and costs.

\subsection{Contributions}

This paper makes three main contributions.

\begin{enumerate}
    \item \textbf{Decision framework for budget allocation in networks.} 
    We formulate the Budgeted Generalized Fractional Influence Maximization (BGFIM) problem, which captures the challenge of allocating limited marketing resources across individuals in a network in the presence of heterogeneous response and network spillovers. By allowing continuous levels of incentives, heterogeneous response functions, and heterogeneous costs, the framework provides a more realistic representation of targeted marketing interventions than existing binary or homogeneous models.

    \item \textbf{Scalable solution approach with theoretical guarantees.} 
    We propose a discrete greedy algorithm that provides a tractable approach to solving the BGFIM problem at scale and establishes provable approximation guarantees for its performance. The method is designed to account jointly for influence generation and incentive costs, enabling efficient optimization under realistic budget constraints.

    \item \textbf{Quantitative insights for marketing resource allocation.} 
    Through experiments on multiple large-scale networks, we demonstrate that continuous allocation strategies can outperform standard heuristics, particularly in settings with heterogeneous response and costs. We further show how the framework can be used to support profit-oriented decision making by identifying budget levels that balance marginal returns and costs, providing actionable guidance for marketing campaign design.
\end{enumerate}

\subsection{Organization}
The remainder of the paper is organized as follows. Section~\ref{sec:related-work} reviews the conceptual motivation and related literature. Section~\ref{sec:model} introduces the modeling framework, notation, and the Budgeted Generalized Fractional Influence Maximization (BGFIM) problem. Section~\ref{sec:algorithm} presents our solution approach, including the discrete greedy algorithm and its theoretical approximation guarantees. Section~\ref{sec:experiments} reports computational experiments on real-world social networks and examines the use of nested solutions for profit-oriented budget selection. Finally, Section~\ref{sec:conclusion} concludes and outlines directions for future research.

\vspace{.1in}
\section{Conceptual Motivation and Related Literature}
\label{sec:related-work}

Diffusion and word-of-mouth processes are foundational to marketing science, where they are used to study how products, behaviors, and information spread through populations. Classic models, such as the Bass diffusion framework \citep{bass1969new} and its network-based extensions \citep{goldenberg2001using}, highlight how individual adoption decisions aggregate into population-level dynamics through both external influence and interpersonal interactions. Subsequent work has emphasized the central role of social networks in shaping these processes, demonstrating that peer effects, social reinforcement, and network structure critically determine the speed and extent of diffusion \citep{libai2013customer, aral2013social, lobel2017customer}. In modern digital environments, these mechanisms are further amplified by online platforms, where repeated exposure, homophily, and algorithmic content curation influence both adoption behavior and downstream word-of-mouth dynamics \citep{aral2020social, pan2025influencer}. Collectively, this body of work underscores that diffusion is inherently a networked phenomenon, in which both individual heterogeneity and interconnected social influence must be accounted for to accurately describe and predict market outcomes.

Building on this foundation, a central concern in marketing is how firms can actively shape diffusion outcomes through targeted interventions. A growing literature examines how firms allocate resources across consumers and influencers to maximize engagement, adoption, and long-term value \citep{libai2013customer, gupta2008allocating, mallipeddi2022framework}. These studies emphasize that effective marketing strategies must account not only for who is targeted, but also for when, how, and with what intensity interventions are deployed. Empirical and field-based research highlights the importance of timing, exposure frequency, and content design in shaping consumer response, particularly in digital environments where users are repeatedly exposed to competing signals \citep{kanuri2018scheduling, pan2025influencer}. Moreover, the effectiveness of targeted interventions depends critically on network spillovers: influencing one individual can generate indirect effects through their social connections, amplifying or attenuating the overall impact of marketing actions. These insights collectively point to the need for decision frameworks that integrate individual-level targeting with network-mediated spillovers, allowing firms to allocate limited marketing budgets in a manner that accounts for both direct and indirect effects.

To operationalize marketing interventions in networked settings, researchers have long studied how behaviors and information spread through social contagion processes. Early work in sociology and marketing models adoption as arising from interpersonal influence and word-of-mouth interactions, where individuals are affected by the behavior of their peers \citep{granovetter1978threshold, schelling2006micromotives, goldenberg2001talk, goldenberg2001using}. Building on these insights, \citet{domingos2001mining} introduced the idea of leveraging network structure for targeted marketing, framing customer selection as a problem of maximizing network value. This perspective was formalized by \citet{kempe2003maximizing}, who introduced the influence maximization (IM) problem and provided a unified framework for studying how targeted interventions can maximize downstream diffusion in networks.

At the core of this framework are canonical diffusion models that formalize how influence propagates through networks. The Independent Cascade (IC) and Linear Threshold (LT) models, introduced in \citet{kempe2003maximizing}, provide two widely used representations of diffusion dynamics. The IC model captures stochastic contagion through pairwise interactions, closely related to word-of-mouth diffusion processes studied in marketing \citep{goldenberg2001talk, goldenberg2001using}, whereas the LT model captures cumulative peer influence through threshold-based adoption rules rooted in sociological models of collective behavior \citep{granovetter1978threshold, schelling2006micromotives}. Under these models, the expected influence function is monotone and submodular, enabling greedy algorithms with strong approximation guarantees \citep{kempe2003maximizing, nemhauser1978analysis}. More recent extensions enrich these models to better capture behavioral dynamics, including reinforcement effects such as the Pressure Threshold (PT) model \citep{stutsman2026pressure,umrawal2026social}.

Building on these diffusion models, a substantial body of work in computer science and operations research has developed scalable algorithms for influence maximization. Early contributions focused on improving the efficiency of greedy methods, including the CELF algorithm and its variants \citep{celf-07, goyal2011celf++}, followed by sampling-based techniques that enable near-linear-time approximation on large networks \citep{borgs2014maximizing, tang2015influence}. Subsequent work has incorporated structural properties of networks to further improve scalability and solution quality, including community-aware methods and decomposition-based approaches \citep{umrawal2023leveraging, umrawal2023community, robson2025community}. In parallel, extensions to adaptive and online settings have been developed to account for uncertainty, learning, and dynamic environments \citep{agarwal2022stochastic, nie2022explore, xu2026lofa}. Collectively, this literature provides powerful optimization tools for implementing network-based marketing strategies at scale.

Complementing these model-based approaches, a parallel stream of research has developed data-driven methods that infer influence and model user response directly from observed behavior. In network science, credit distribution approaches estimate influence by attributing observed downstream adoption to upstream nodes based on cascade data, avoiding reliance on explicit generative diffusion assumptions \citep{goyal2011data}. In marketing and operations, related work models engagement as a function of exposure rather than propagation. For example, \citet{mallipeddi2022framework} develops an empirical framework for influencer marketing in which user response depends on cumulative exposure to multiple influencers, incorporating effects such as repetition and temporal decay. These data-driven approaches provide greater empirical realism by capturing exposure effects and behavioral heterogeneity, but they typically abstract away from explicit modeling of incentive allocation and economic decision-making in networked settings.

Beyond binary targeting decisions, recent research has sought to better align influence maximization models with the realities of marketing practice by allowing for more flexible allocation of resources. In particular, a growing body of work considers fractional or continuous variants of the problem, in which firms can assign varying levels of incentives or effort across individuals rather than making purely discrete selections \citep{yang2016continuous, chen2020scalable, BhimarajuRVU2024}. These formulations capture the intuition that marketing interventions—such as discounts, payments to influencers, or promotional efforts—can be adjusted continuously to influence consumer behavior. Algorithmically, these models introduce additional complexity, requiring optimization over high-dimensional continuous decision spaces while preserving tractable approximation guarantees. Recent advances have addressed these challenges through discretization techniques, lattice-based methods, and scalable greedy approximations \citep{chen2020scalable, umrawal2023fractional}. Related work has also explored extensions that incorporate structural features such as community organization and heterogeneous influence pathways, further improving the effectiveness of network interventions in practice \citep{umrawal2023leveraging, umrawal2023community, robson2025community}.

Despite these advances, existing fractional formulations rely on simplifying assumptions that limit their applicability in marketing contexts. A common modeling approach assumes a direct mapping between the level of incentive assigned to an individual and their probability of adoption, effectively treating economic inputs and behavioral responses as interchangeable \citep{umrawal2023fractional,umrawal2026jair}. This abstraction overlooks well-documented heterogeneity in consumer behavior, including variation in baseline preferences, responsiveness to incentives, and sensitivity to social influence \citep{goldenberg2001using, libai2013customer}. Moreover, most existing models do not explicitly account for heterogeneous costs of targeting individuals, despite their central importance in real-world settings such as influencer compensation, customer acquisition, and platform-specific pricing. These limitations highlight a gap between existing optimization frameworks and the behavioral and economic realities of modern marketing environments.

Taken together, these streams of research highlight both the strengths and the limitations of existing approaches to network-based marketing. Prior work provides a rich set of tools for modeling diffusion, designing targeted interventions, and optimizing influence in large-scale networks. However, a key gap remains in integrating the economic and behavioral dimensions of marketing decision-making within these frameworks. In particular, existing models do not fully capture the joint role of heterogeneous response functions, continuous incentive allocation, and heterogeneous costs in shaping diffusion outcomes. 

Motivated by these limitations, we develop a framework that more closely aligns influence optimization with the realities of modern marketing environments. Our approach builds on the influence maximization paradigm while generalizing it to allow for flexible incentive-response relationships and heterogeneous cost structures across individuals. By explicitly separating the allocation decision from the resulting behavioral response, the framework captures the interplay between economic incentives and consumer heterogeneity in a networked setting. This perspective bridges insights from marketing, network science, and optimization, providing a foundation for studying how firms can allocate limited resources to maximize both direct and spillover effects in social networks.
\vspace{.1in}
\section{Mathematical Model and Problem Formulation}
\label{sec:model}

Marketing interventions in social networks rely on interpersonal influence, where adoption by one individual can affect the behavior of others through interactions such as recommendations, content sharing, and peer endorsement, as reflected in observable engagement actions (e.g., likes, shares, or reposts), giving rise to diffusion across the network. To capture these influence pathways, we model the social network as a directed graph $G = (V, E)$, where each node $v \in V$ represents an individual user and each directed edge $(v, u) \in E$ indicates a potential influence relationship from user $v$ to user $u$. That is, if $(v, u) \in E$, then user $v$ may affect the likelihood that user $u$ adopts a product or behavior. We associate each edge $(v,u) \in E$ with an influence probability $e_{v,u} \in [0,1]$, capturing the strength of the relationship between users. These probabilities may reflect factors such as interaction frequency, trust, or historical engagement, and in practical settings can be estimated from observed diffusion or interaction data \citep{goyal2010learning, gomezrodriguez2010inferring}. This graph-based abstraction reflects the structure of online platforms, where connections encode pathways through which influence can spread; the neighborhood of a node $v$, denoted by $\mathcal{N}_v = \{u \in V : (v, u) \in E\}$, consists of all users that $v$ can directly influence.

As discussed in Section~\ref{sec:related-work}, diffusion models provide a framework for describing how adoption and information propagate through social networks. To model how adoption propagates in our setting, we adopt the independent cascade (IC) diffusion model \citep{kempe2003maximizing}, which provides a tractable and widely used representation of stochastic influence spread. Under this model, time evolves in discrete steps $u = 0,1,2,\dots$, and once a user becomes active, they are given a single opportunity to influence each inactive neighbor independently with probability $e_{v,u}$. If the activation attempt succeeds, the neighbor becomes active in the next time step; otherwise, no further attempts are made along that edge. Starting from an initial set of active users $S \subseteq V$ at time $u=0$, the diffusion process unfolds through successive rounds of such influence attempts until no further activations occur.

To formalize the diffusion process, let
\[
Y_u^{(v)} =
\begin{cases}
1, & \text{if node $v$ is active at time $u$,} \\
0, & \text{otherwise.}
\end{cases}
\]
The diffusion is \textit{progressive}, meaning that nodes can switch from inactive to active, but not back.

\begin{definition}[Influence Function]
Given a seed set $S \subseteq V$, the influence $\sigma(S)$ is defined as the expected number of active nodes at the end of the diffusion process:
\begin{align*}
\sigma(S) = \mathbb{E}\left[\sum_{v \in V} Y_T^{(v)} \;\middle|\; Y_0^{(v)} = 1 \ \forall v \in S,\ Y_0^{(v)} = 0 \ \forall v \notin S \right],
\end{align*}
where $T$ denotes the (random) time at which the diffusion process terminates.
\end{definition}

The influence function $\sigma: 2^V \to \mathbb{R}_{\ge 0}$ captures the total reach generated by targeting an initial set of users.

\begin{definition}[Monotonicity]
A set function $f: 2^V \to \mathbb{R}_{\ge 0}$ is monotone if for any $A \subseteq B \subseteq V$, we have $f(A) \le f(B)$.
\end{definition}

\begin{definition}[Submodularity]
A set function $f: 2^V \to \mathbb{R}_{\ge 0}$ is submodular if for any $A \subseteq B \subseteq V$ and any node $v \notin B$, it holds that
\begin{align*}
f(A \cup \{v\}) - f(A) \ge f(B \cup \{v\}) - f(B).
\end{align*}
\end{definition}

\begin{lemma}[\citet{kempe2003maximizing}]
Under the IC model, the influence function $\sigma(\cdot)$ is monotone and submodular.
\end{lemma}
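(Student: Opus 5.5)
The plan is to follow the live-edge (possible-world) argument of \citet{kempe2003maximizing}. Because the process is random, I will not analyze $\sigma$ directly. Instead I will write $\sigma$ as a convex combination of simpler deterministic set functions, each of which is a graph reachability count.

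\textbf{Step 1 (coupling).} For each edge $(v,u)\in E$, flip an independent coin with success probability $e_{v,u}$ in advance, and declare the edge \emph{live} on success. Let $X$ denote the random subgraph of live edges. I claim that, for any seed set $S$, the set of nodes active at termination of the IC process has the same distribution as the set of nodes reachable from $S$ by live-edge paths in $X$.

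The justification is the principle of deferred decisions. In the IC model, each edge $(v,u)$ is tried at most once, namely when $v$ first becomes active. That attempt succeeds with probability $e_{v,u}$, independently of everything else. Revealing the coin of $(v,u)$ only at that moment therefore yields exactly the IC dynamics. Conversely, running the IC dynamics on precommitted coins activates precisely the nodes reachable from $S$, which I will check by induction on path length / time step.

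\textbf{Step 2 (decomposition).} Let $R_X(S)$ be the set of nodes reachable from $S$ in $X$. By Step 1,
\[
\sigma(S)=\sum_{X}\Pr[X]\,|R_X(S)|.
\]

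\textbf{Step 3 (each term).} For a fixed $X$, $R_X(S)=\bigcup_{s\in S}R_X(\{s\})$. This gives monotonicity, since enlarging $S$ only enlarges the union. It also gives submodularity, because it is a coverage function. For $A\subseteq B$ and $v\notin B$, the marginal gain is
\[
|R_X(\{v\})\setminus R_X(A)|\ \ge\ |R_X(\{v\})\setminus R_X(B)|,
\]
since $R_X(A)\subseteq R_X(B)$.

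\textbf{Step 4 (combining).} Nonnegative linear combinations preserve both monotonicity and submodularity: the defining inequalities are linear in $f$. Hence $\sigma$ is monotone and submodular.

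The main obstacle is Step 1, the rigorous equivalence between the sequential randomness of the IC process and the static live-edge graph. I would handle it by explicitly constructing both processes on a common probability space of edge coins. I would then prove by induction on $t$ that the IC-active set at time $t$ equals the set of nodes at live-path distance at most $t$ from $S$. Steps 2 through 4 are then routine.
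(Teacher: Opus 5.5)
Your proposal is correct, and it is exactly the live-edge (possible-world) argument of \citet{kempe2003maximizing}. The paper gives no proof of this lemma and defers to that reference, so your steps follow the intended route: the coupling on precommitted edge coins, the induction showing that the IC-active set at time $t$ equals the nodes within live-path distance $t$ of $S$, the decomposition of $\sigma$ into reachability (coverage) functions, and closure of monotonicity and submodularity under nonnegative combinations.
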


Monotonicity reflects the idea that targeting additional users cannot reduce the expected spread of influence. Submodularity captures diminishing returns: the marginal benefit of adding a new user is larger when few users have already been targeted. In marketing terms, early targeted users can reach previously untapped segments of the network, whereas additional users increasingly generate overlapping influence.

\begin{problem}[Discrete Influence Maximization (DIM)] \label{problem:DIM}
The classical influence maximization problem seeks to identify a seed set $S \subseteq V$ of size at most $k$ that maximizes the influence function $\sigma(S)$:
\begin{align*}
\argmax_{S \subseteq V} \quad & \sigma(S) \\
\text{s.t.} \quad & |S| \le k.
\end{align*}
\end{problem}
\noindent
Problem~\ref{problem:DIM} is known to be NP-hard \citep{kempe2003maximizing}, reflecting the combinatorial complexity of selecting an optimal set of users in large networks.

While the discrete formulation provides important structural insights, it abstracts away from key features of practical marketing interventions. In many applications, firms do not make purely binary targeting decisions; instead, they allocate varying levels of incentives across users, resulting in heterogeneous and probabilistic adoption behavior \citep{domingos2001mining, goldenberg2001talk, goldenberg2001using, aral2009distinguishing, iyengar2011opinion, yoganarasimhan2012impact, yang2016continuous}. Such practices are commonly observed in referral programs and targeted promotional campaigns, where the intensity of incentives influences the likelihood of user participation \citep{ryu2007referral, schmitt2011referrals}. This motivates the development of more general fractional models that accommodate continuous decision variables and user-specific response functions.

Motivated by the need to model heterogeneous and non-binary marketing interventions, we generalize the discrete formulation by allowing continuous decision variables. Specifically, instead of selecting a binary seed set $S \subseteq V$, we consider a vector $\mathbf{x} = (x_1, \ldots, x_n) \in [0,1]^n$, where $x_v$ denotes the level of incentive or marketing effort allocated to user $v$. 

We assume that each user $v \in V$ becomes active at time $u=0$ with probability $f_v(x_v)$, where $f_v: [0,1] \to [0,1]$ is a non-decreasing function capturing how incentives translate into initial adoption probabilities. This formulation allows for heterogeneous user responses to marketing interventions, reflecting differences in preferences, responsiveness, and baseline propensity to adopt. The subsequent diffusion process then evolves according to the same independent cascade (IC) dynamics as in the discrete setting.

\begin{definition}[Continuous Influence Function]
Given an allocation vector $\mathbf{x} \in [0,1]^n$, the expected influence is defined as
\begin{align*}
\tilde{\sigma}(\mathbf{x}) = \mathbb{E}\left[\sum_{v \in V} Y_T^{(v)} \;\middle|\; Y_0^{(v)} = 1 \text{ with probability } f_v(x_v) \text{ independently for each } v \in V \right].
\end{align*}
\end{definition}

The function $\tilde{\sigma}: [0,1]^n \to \mathbb{R}_{\ge 0}$ captures the expected spread of influence resulting from heterogeneous and probabilistic seeding induced by continuous marketing actions.


An equivalent and useful interpretation expresses the model in terms of a distribution over random seed sets. Let $S_0 \subseteq V$ denote the random set of nodes activated at time $0$. Then the expected influence can be written as
\[
\tilde{\sigma}(\mathbf{x}) = \mathbb{E}_{S_0}\left[\sigma(S_0) \mid \mathbf{x}\right],
\]
where $\sigma(S_0)$ denotes the expected cascade size under the IC model starting from seed set $S_0$. This perspective highlights that the continuous formulation can be viewed as inducing a probabilistic relaxation of the discrete influence maximization problem.

The general formulation above allows for flexibility in modeling how incentives translate into adoption probabilities. A natural and widely studied special case assumes a linear relationship, i.e., $f_v(x_v) = x_v$ for all $v \in V$. Under this assumption, each component $x_v$ can be interpreted directly as the probability that user $v$ is initially activated, and the random seed set $S_0$ is generated by including each $v$ independently with probability $x_v$. In this case, the expected influence function $\tilde{\sigma}(\mathbf{x})$ coincides with the multilinear extension of the discrete influence function $\sigma(\cdot)$ associated with the independent cascade model \citep{umrawal2023fractional,umrawal2026jair}. 

While this identity-based formulation admits strong algorithmic guarantees, it imposes a restrictive assumption that the level of incentive directly corresponds to the probability of adoption and that this relationship is identical across users.

More generally, we consider affine response functions of the form $f_v(x_v) = a_v x_v + b_v$, where $a_v \ge 0$ and $b_v \in [0,1]$ are user-specific parameters. In this work, we focus on this class of activation functions and develop a polynomial-time approximation algorithm with a constant-factor guarantee for this generalized setting.

This formulation captures two key dimensions of user heterogeneity:
\begin{enumerate}
    \item The parameter $b_v$ represents the \emph{base activation probability} of user $v$, i.e., the likelihood that the user adopts the product even in the absence of any incentive. This accounts for organic interest or pre-existing affinity toward the product.
    
    \item The parameter $a_v$ quantifies the \emph{susceptibility} of user $v$ to incentives. A higher value of $a_v$ indicates that the user's activation probability increases rapidly with additional marketing effort, reflecting high price sensitivity or promotional responsiveness. Conversely, a lower $a_v$ suggests that the user is less influenced by incentives, perhaps due to brand loyalty or lack of interest.
\end{enumerate}

By allowing both baseline adoption and marginal responsiveness to vary across users, the affine formulation extends the identity model to capture richer and more realistic patterns of user behavior in networked diffusion processes.

From a modeling perspective, the affine structure provides a simple yet expressive parametric family that balances flexibility with analytical tractability. In particular, it enables the incorporation of heterogeneous user behavior while avoiding more complex nonlinear response functions that may hinder optimization. From an applied standpoint, the parameters $a_v$ and $b_v$ can be estimated from historical user engagement and campaign data, for example, by analyzing how observed adoption probabilities vary with different levels of incentives across users.

Prior work in marketing and social networks has emphasized the importance of heterogeneous influence and adoption behavior \citep{aral2009distinguishing, iyengar2011opinion}, while continuous and discretized influence maximization frameworks \citep{yang2016continuous, chen2020scalable} allow for heterogeneous response functions across users. Our formulation makes this heterogeneity explicit and interpretable, providing a natural bridge between data-driven modeling and tractable network optimization.

Furthermore, in many practical settings, marketing interventions are subject to budget constraints and heterogeneous costs across users. The cost of allocating incentives to a user may depend on factors such as their reach, engagement level, or alignment with the product, making it unrealistic to assume uniform cost structures. To capture this, we associate each user $v \in V$ with a cost parameter $w_v > 0$, which represents the cost per unit of incentive allocated to that user.





Given a total budget $k \in \mathbb{R}_{\ge 0}$, the allocation vector $\mathbf{x}$ must satisfy a budget constraint that accounts for heterogeneous user costs. For vectors $\mathbf{x}, \mathbf{w} \in \mathbb{R}^n$, we denote the inner product by
\[
\langle \mathbf{w}, \mathbf{x} \rangle = \sum_{v \in V} w_v x_v,
\]
so that the budget constraint can be written compactly as
\[
\langle \mathbf{w}, \mathbf{x} \rangle \le k.
\]

Let $k \in \mathbb{R}_{\ge 0}$ denote the total available budget. The objective is to select an allocation vector $\mathbf{x} = (x_1,\dots,x_n) \in [0,1]^n$ that maximizes the expected spread of influence, subject to a budget constraint. 

The decision maker seeks to allocate incentives across users to maximize the expected spread of influence subject to the budget constraint. This leads to the following optimization problem, which we refer to as the Budgeted Generalized Fractional Influence Maximization (BGFIM) problem:

\begin{problem}[Budgeted Generalized Fractional Influence Maximization (BGFIM)]
\label{problem:bgfim}
\begin{align*}
    \argmax_{\mathbf{x} \in [0,1]^n} \quad & \tilde{\sigma}(\mathbf{x}),\\
    \text{s.t.} \quad &\langle \mathbf{w}, \mathbf{x} \rangle \le k.
\end{align*}
\end{problem}

This formulation generalizes several models studied in the literature. In the case where the cost parameters are uniform, i.e., $w_v = 1$ for all $v \in V$, the problem reduces to a generalized fractional influence maximization (GFIM) framework with heterogeneous activation functions \cite{BhimarajuRVU2024}. Furthermore, when the activation functions take the identity form $f_v(x_v) = x_v$, and the cost parameters are uniform, the formulation recovers the fractional influence maximization (FIM) problem \citep{umrawal2023fractional,umrawal2026jair}.

By allowing both heterogeneous cost parameters and affine activation functions, the proposed BGFIM formulation strictly generalizes these prior models and provides a more expressive framework for modeling incentive-driven diffusion in social networks.

\vspace{.1in}
\section{Efficient Approximation Algorithm with Provable Guarantees}
\label{sec:algorithm}

We now turn to the problem of solving the Budgeted Generalized Fractional Influence Maximization (BGFIM) problem defined in \Cref{problem:bgfim}. In the classical discrete influence maximization setting, the objective function is monotone and submodular, enabling the use of efficient greedy algorithms with strong approximation guarantees \citep{kempe2003maximizing, nemhauser1978analysis}. More generally, submodular maximization under cardinality and knapsack constraints has been extensively studied, with both discrete and continuous variants admitting $(1 - 1/e)$-approximation algorithms \citep{Vondrak2008, calinescu2011maximizing}. 

However, extending these techniques to the BGFIM setting presents significant challenges. In particular, the objective function $\tilde{\sigma}(\mathbf{x})$ is defined over a continuous domain and depends on user-specific activation functions and heterogeneous cost parameters. Unlike the identity case, where the objective reduces to the multilinear extension of a submodular function, the presence of affine activation functions alters this structure and complicates the direct application of existing optimization methods. As a result, designing scalable algorithms with provable guarantees for \Cref{problem:bgfim} requires new techniques that can effectively handle both continuous decision variables and heterogeneous user behavior.

Our approach builds on the observation that, despite the loss of a standard multilinear structure, the objective function $\tilde{\sigma}(\mathbf{x})$ still retains a form of smooth marginal behavior that can be exploited algorithmically. In particular, the impact of allocating a small amount of incentive to a user can be characterized through the local sensitivity of their activation function. For affine functions of the form $f_v(x_v) = a_v x_v + b_v$, this sensitivity is captured by the parameter $a_v$, which determines how rapidly the user’s activation probability increases with additional incentives.

This insight enables a natural extension of the classical greedy paradigm for submodular maximization. Rather than selecting nodes solely based on their marginal contribution to influence, we instead consider a cost-normalized marginal gain that incorporates both the responsiveness of the user and the cost of targeting them. Specifically, we prioritize nodes according to a marginal gain-per-cost criterion scaled by the factor $a_v / w_v$, thereby favoring users who provide the largest increase in expected influence per unit of budget. This formulation preserves the intuitive structure of greedy optimization while explicitly accounting for heterogeneity in both behavioral response and cost, and forms the basis of our algorithmic design.

Building on this intuition, our method proceeds by iteratively allocating budget to nodes in a greedy fashion, guided by a suitably defined marginal gain-per-cost criterion. At a high level, the algorithm incrementally constructs a solution by selecting nodes that offer the greatest increase in expected influence relative to their cost, adjusted for their responsiveness to incentives. 

To formalize and justify this approach, it is important to understand how the continuous influence function $\tilde{\sigma}(\mathbf{x})$ behaves as a function of the allocation vector $\mathbf{x}$. In particular, we require a representation that captures how marginal improvements in incentives translate into expected influence gains under heterogeneous activation functions. The following subsection develops key structural properties of $\tilde{\sigma}(\mathbf{x})$ that underpin our algorithmic design and analysis.

To motivate and analyze our algorithmic approach, we first examine key structural properties of the influence function $\tilde{\sigma}(\mathbf{x})$.

Recall that under the randomized seeding mechanism, the initial active set $S_0$ is a random subset of nodes where each node $v$ is included independently with probability $f_v(x_v)$. Conditioning on $S_0$, the subsequent diffusion follows the IC model, and the total expected influence is given by the discrete influence function $\sigma(S_0)$.

Using this perspective, the continuous influence function can be expressed as
\begin{align}
\tilde{\sigma}(\mathbf{x}) 
= \mathbb{E}_{S_0}\left[\sigma(S_0) \mid \mathbf{x}\right]
= \sum_{S_0 \subseteq V} \sigma(S_0)
\prod_{v \in S_0} f_v(x_v)
\prod_{v \notin S_0} \left(1 - f_v(x_v)\right).
\label{eq:tilde-sigma-expansion}
\end{align}
This representation makes explicit the connection between the continuous formulation and a distribution over discrete seed sets, and will be central to the analysis that follows.

Under the identity activation functions $f_v(x_v) = x_v$, the expression in \eqref{eq:tilde-sigma-expansion} reduces to
\begin{align*}
\tilde{\sigma}(\mathbf{x}) 
= \sum_{S \subseteq V} \sigma(S)
\prod_{v \in S} x_v
\prod_{v \notin S} (1 - x_v),
\end{align*}
which coincides with the standard multilinear extension of $\sigma(\cdot)$.

This establishes a direct connection between the continuous formulation and classical approaches to submodular maximization.

We now consider affine activation functions of the form $f_v(x_v) = a_v x_v + b_v$. Substituting into \eqref{eq:tilde-sigma-expansion}, we observe that $\tilde{\sigma}(\mathbf{x})$ can still be expressed as a weighted sum over subsets $S \subseteq V$, with weights determined by the parameters $\{a_v, b_v\}_{v \in V}$ and the allocation vector $\mathbf{x}$. While the resulting function is no longer the standard multilinear extension of $\sigma(\cdot)$, it retains a related polynomial structure that preserves key features of marginal influence contributions.

This perspective highlights that, although the affine formulation departs from classical multilinear settings, it preserves sufficient structure—particularly in terms of marginal gain behavior—to enable the design of efficient greedy algorithms. These observations provide the foundation for our algorithmic approach in the BGFIM setting.

Leveraging these structural insights, we now develop a discrete-greedy algorithm for solving the BGFIM problem. The algorithm iteratively allocates incentives to users by selecting those that provide the greatest marginal increase in expected influence per unit cost, adjusted for their responsiveness to incentives. This approach extends classical greedy methods for submodular maximization to the present setting, while explicitly accounting for heterogeneity in both activation behavior and targeting costs. We formalize this approach in Algorithm~\ref{alg:discrete-greedy-general}.

\begin{algorithm}[H]
    \caption{Generalized Discrete-Greedy Algorithm for \Cref{problem:bgfim}}
    \label{alg:discrete-greedy-general}
    \textbf{Input:} Users $V$, influence function $\sigma:2^V \rightarrow \mathbb{R}_{\ge 0}$, budget $k$\\
    \textbf{Output:} Allocation vector $y \in \mathbb{R}_{\ge 0}^{|V|}$, where $y_v$ denotes incentive assigned to user $v$
    \begin{algorithmic}[1]
        \State Initialize $y_v \gets 0$ for all $v \in V$ \Comment{No initial incentives to any user}
        \label{alg:step:init-y}
        \State $S_0 \gets \emptyset$ \Comment{Empty initial set of targeted users}
        \label{alg:step:init-s0}
        \While{$\sum_{v \in V} w_v y_v < k$ \textbf{and} $S_0 \neq V$} \Comment{Continue until budget is exhausted or all users are targeted}
        \label{alg:step:while}
            \State $v_{\text{next}} \gets \arg\max_{v \notin S_0} \left( \sigma(S_0 \cup \{v\}) - \sigma(S_0) \right) \cdot \frac{f'_v(0)}{w_v}$
            \Comment{Select a user that maximizes influence gain per unit cost}
            \label{alg:step:choose-max}
            \State $y_{v_{\text{next}}} \gets \min\left( f_{v_{\text{next}}}^{-1}(1), \frac{k - \sum_{v \in V} w_v y_v}{w_{v_{\text{next}}}} \right)$
            \Comment{Allocate maximum feasible incentive within budget}
            \label{alg:step:vnext}
            \State $S_0 \gets S_0 \cup \{v_{\text{next}}\}$
            \Comment{Add user to targeted set}
            \label{alg:step:s0-update}
        \EndWhile
    \end{algorithmic}
\end{algorithm}

\noindent\textbf{Algorithm Description.}
Algorithm~\ref{alg:discrete-greedy-general} formalizes the discrete-greedy strategy for solving \Cref{problem:bgfim}. The method incrementally allocates incentives across users in the social network by selecting, at each step, the user that delivers the highest increase in expected influence per unit of budget. Crucially, this marginal contribution is weighted by the user’s responsiveness to incentives, captured by the local sensitivity of the activation function. This design allows the algorithm to simultaneously account for network effects, heterogeneous adoption behavior, and variation in targeting costs within a unified allocation framework.

\textbf{Initialization (Lines~\ref{alg:step:init-y}--\ref{alg:step:init-s0}).}  
The algorithm begins with no incentives allocated, i.e., $y_v = 0$ for all users $v \in V$, and an empty set $S_0$ representing the users initially targeted in the campaign. At this stage, no adoption has been triggered, and the subsequent allocation decisions are driven entirely by the marginal influence gains generated through the network.

\textbf{Greedy Selection Loop (Lines~\ref{alg:step:while}--\ref{alg:step:s0-update}).}  
At each iteration, the algorithm evaluates the marginal impact of allocating incentives to users who have not yet been targeted. Specifically, it selects the user $v_{\text{next}}$ that maximizes
\[
\left(\sigma(S_0 \cup \{v\}) - \sigma(S_0)\right)\frac{f'_v(0)}{w_v},
\]
which captures the expected increase in overall adoption per unit of budget. This selection rule balances three key considerations: the incremental influence a user can generate through the network, the user's responsiveness to incentives, and the cost required to engage that user. By prioritizing users with the highest influence-per-cost ratio, the algorithm effectively allocates resources toward individuals who can trigger the most impactful diffusion cascades in the social network.

\emph{Set Update (Line~\ref{alg:step:s0-update}).}  
The selected user is added to the targeted set $S_0$, and the allocation process continues. This update reflects the sequential expansion of the campaign, where each newly incentivized user contributes to the growing pool of adopters who can influence others through the social network.

\textbf{Termination.}  
The algorithm terminates when the available campaign budget $k$ is fully utilized or when all users have been considered for targeting. The resulting allocation vector $y$ specifies the incentives assigned across users, effectively defining the final campaign strategy. This allocation balances influence potential, user responsiveness, and cost constraints, yielding a targeted deployment of incentives designed to maximize overall adoption through the social network.

The design of Algorithm~\ref{alg:discrete-greedy-general} closely parallels classical greedy approaches for submodular maximization under budget constraints, while incorporating key adjustments to account for heterogeneous user responsiveness and incentive costs. By prioritizing users based on their cost-adjusted marginal contribution to expected adoption, the algorithm effectively allocates the campaign budget toward individuals with the greatest potential to drive network-wide diffusion. 

Despite the additional complexity introduced by heterogeneous activation functions, we show that this greedy allocation strategy retains strong performance guarantees. In particular, we establish that the proposed algorithm achieves a $(1 - 1/e)$-approximation to the optimal solution of \Cref{problem:bgfim}.

\begin{theorem}[Constant-Factor Approximation Guarantee]
\label{thm:const-approx}
Suppose that the user activation functions are affine, i.e., $f_v(y_v) = a_v y_v + b_v$ with $a_v, b_v \ge 0$ for all users $v \in V$. Then the incentive allocation vector $y$ returned by Algorithm~\ref{alg:discrete-greedy-general} satisfies
\begin{align*}
    F(y) \ge \left(1 - \frac{1}{e} \right) F(y^*),
\end{align*}
where $y^*$ is an optimal solution to \Cref{problem:bgfim}.
\end{theorem}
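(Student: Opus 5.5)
We take $F$ to be $\tilde{\sigma}$ and write $y$ for the allocation vector. The plan has two parts. First, rewrite \Cref{problem:bgfim} with affine activations as a knapsack-constrained maximization of the multilinear extension of a monotone submodular set function. Second, run the standard continuous-greedy style argument: the output of Algorithm~\ref{alg:discrete-greedy-general} is exactly a path that moves along one coordinate at a time, and along that path the objective obeys a differential inequality that integrates to $1-1/e$.

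\textbf{Step 1 (reduction).} Write $f_v(y_v)=b_v+(1-b_v)q_v$ with $q_v=a_vy_v/(1-b_v)$, so that $q_v\in[0,1]$ exactly when $f_v(y_v)\in[b_v,1]$. The random seed set $S_0$ in \eqref{eq:tilde-sigma-expansion} can then be generated in two independent layers:
\begin{itemize}
\item a base set $B$ containing each $v$ independently with probability $b_v$;
\item a set $T$ containing each $v$ independently with probability $q_v$.
\end{itemize}
Setting $S_0=B\cup T$ gives $\tilde{\sigma}(y)=G(q)$, where $G$ is the multilinear extension of
\[
g(T)=\mathbb{E}_B[\sigma(B\cup T)].
\]
Because $\sigma$ is monotone and submodular (the lemma of \citet{kempe2003maximizing}), and both properties are preserved under $T\mapsto B\cup T$ and under averaging over $B$, the function $g$ is monotone and submodular. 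The budget becomes
\[
\sum_v c_vq_v\le k, \qquad c_v=\frac{w_v(1-b_v)}{a_v}.
\]
Nodes with $a_v=0$ or $b_v=1$ are irrelevant and can be dropped.

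\textbf{Step 2 (local optimality gap).} Use two standard facts about the multilinear extension $G$ of a monotone submodular $g$. It is concave along nonnegative directions, and its partial derivatives are antitone. Let $q^*$ be the reduced optimum. For any current point $q$ these give
\begin{align*}
G(q^*)-G(q)\le G(q\vee q^*)-G(q)\le \sum_v q^*_v\,\partial_vG(q)\le k\max_v\frac{\partial_vG(q)}{c_v}.
\end{align*}
At an integral point $q=\mathbf{1}_S$ we have $\partial_vG(\mathbf{1}_S)=g(S\cup\{v\})-g(S)$. Moreover $G$ is affine in each single coordinate. Hence raising $q_v$ from $0$ to $1$ with the other coordinates fixed at $\mathbf{1}_S$ produces gain at a constant rate $\partial_vG/c_v$ per unit of budget.

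\textbf{Step 3 (the differential inequality).} Parametrize the greedy trajectory by the budget spent, $t\in[0,k]$. Algorithm~\ref{alg:discrete-greedy-general} fills one coordinate completely (Line~\ref{alg:step:vnext} sets $f_v(y_v)=1$) before choosing the next, and it truncates the last coordinate at the budget. So on each segment the rate of increase equals the maximal ratio from Step 2, computed at the integral point at the start of the segment. This gives
\[
\frac{d}{dt}G(q(t))\ge \frac{G(q^*)-G(q(t))}{k}.
\]
Integrating from $0$ to $k$ yields $G(q(k))\ge(1-e^{-1})G(q^*)$, since $G(q(0))\ge 0$ only helps. Undoing the change of variables gives $F(y)\ge(1-1/e)F(y^*)$. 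The case where the loop stops because $S_0=V$ before the budget is spent is trivial: then $f_v(y_v)=1$ for every $v$, so $F(y)$ is already the maximum.

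\textbf{Main obstacle.} The delicate step is matching the reduction with the exact selection rule in Line~\ref{alg:step:choose-max}. The reduced problem calls for maximizing
\[
\bigl(g(S\cup\{v\})-g(S)\bigr)\frac{a_v}{w_v(1-b_v)}.
\]
The algorithm instead uses
\[
\bigl(\sigma(S\cup\{v\})-\sigma(S)\bigr)\frac{f'_v(0)}{w_v}, \qquad f'_v(0)=a_v.
\]
The two coincide when $b_v=0$, since then $g=\sigma$. For general $b_v$ I would interpret the marginal gain in Line~\ref{alg:step:choose-max} as the marginal gain of the actual objective. This is the natural reading, because $\sigma$ evaluated with the organic activations already included is $g$, and the factor $(1-b_v)$ is precisely the derivative of $f_v$ measured in units of the residual probability. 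If the literal rule must be kept, the fallback is to show that the discrepancy only rescales each candidate's ratio by $(1-b_v)$. That would give a weaker factor in general, so I expect the clean $1-1/e$ bound to hold under the $g$-marginal reading.
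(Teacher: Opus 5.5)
Your route is the paper's route made rigorous. The paper's sketch also integrates $\frac{d}{dk}F\ge (F(y^*)-F)/k$ along the spent budget. Your coupling $S_0=B\cup T$, which turns $F$ into the multilinear extension $G$ of the monotone submodular $g$, is exactly what is needed to justify the step the sketch only asserts: that the Line~\ref{alg:step:choose-max} score is the derivative of $F$ with respect to the budget.

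Steps~1--3 are correct for the greedy that ranks $v\notin S$ by $\bigl(g(S\cup\{v\})-g(S)\bigr)/c_v$, after two small repairs. First, in Step~2 keep $(q^*_v-q_v)^+$ instead of bounding it by $q^*_v$. Then at $q=\mathbf{1}_S$ the maximum runs only over $v\notin S$; for $v\in S$ one has $\partial_vG(\mathbf{1}_S)=g(S)-g(S\setminus\{v\})$, which is not a candidate of the algorithm. Second, in Step~3 use $G(q(t))\ge G(\mathbf{1}_S)$ to pass from the gap at the start of a segment to the gap at the current point.

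The obstacle you flag, however, is not a matter of interpretation. For Algorithm~\ref{alg:discrete-greedy-general} as written, the theorem is false once some $b_u>0$, and two of your claims about the discrepancy are wrong.

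\begin{itemize}
\item The two rules do not coincide when the candidate has $b_v=0$. You need $g=\sigma$, which requires $b_u=0$ for every $u$, because the organic activations of \emph{other} users enter $g(S\cup\{v\})-g(S)=\mathbb{E}_B[\sigma(B\cup S\cup\{v\})-\sigma(B\cup S)]$.
\item The discrepancy is not a per-candidate rescaling by $1-b_v$, so your fallback yields no constant factor.
\end{itemize}

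Here is a concrete counterexample. Give $u$ probability-$1$ edges to a set $X$, give $v$ probability-$1$ edges to $X\cup\{x'\}$, and give $z$ probability-$1$ edges to a disjoint set $Z$, with $|X|=|Z|=M$. Set $f_u(y)=0.99+0.01y$, $f_v(y)=f_z(y)=y$, $a=b=0$ for all other users, $w\equiv 1$ and $k=1$.

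\begin{itemize}
\item Line~\ref{alg:step:choose-max} scores $v$, $z$, $u$ at $M+2$, $M+1$, $0.01(M+1)$ respectively.
\item The algorithm therefore sets $y_v=1$ and stops, earning $F=M+2.99$.
\item The feasible allocation $y_z=1$ earns $1.99(M+1)$.
\item The ratio tends to $1/1.99<1-1/e$.
\end{itemize}

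Here $b_v=b_z=0$, yet the true rate $\partial F/\partial y_v$ at $y=0$ is only $2+0.01M$, while $z$'s is $M+1$. The paper's sketch passes over exactly this point.

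So your argument proves the bound in two cases: when $b\equiv 0$, where Line~\ref{alg:step:choose-max} coincides with your rule; or when Line~\ref{alg:step:choose-max} is replaced by the score $\bigl(g(S_0\cup\{v\})-g(S_0)\bigr)a_v/\bigl(w_v(1-b_v)\bigr)$. You should state the result in one of those corrected forms rather than as a reading of the given line, since the statement for the literal rule cannot be rescued.
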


\begin{proof}{Proof Sketch.}
The proof adapts the classical analysis of greedy algorithms for submodular maximization under budget constraints to the present setting with heterogeneous user behavior. The central idea is to interpret the progression of the campaign as a function of the allocated budget $k$, and to relate the marginal increase in expected influence achieved by the algorithm to the remaining gap between the current allocation and the optimal solution.

At each step, the algorithm selects the user that maximizes the cost-adjusted marginal gain
\[
\left(\sigma(S_0 \cup \{v\}) - \sigma(S_0)\right)\frac{a_v}{w_v},
\]
which can be interpreted as the rate at which expected influence increases per unit of budget. Using the structure of the influence function together with the chain rule, this quantity corresponds to the derivative of the objective $F(y)$ with respect to the budget $k$. Intuitively, the algorithm prioritizes users who deliver the largest marginal increase in adoption for each additional unit of incentive, ensuring that the allocation progresses in the most effective direction at every stage.

To complete the analysis, we compare this rate of improvement with the best possible marginal gain achievable by the optimal allocation $y^*$. Using submodularity of the underlying influence function and the budget constraint, we obtain a bound showing that the rate at which $F(y)$ increases under the algorithm is at least a constant fraction of the remaining optimality gap $F(y^*) - F(y)$. This leads to a differential inequality of the form
\[
\frac{d}{dk} F(y) \ge \frac{1}{k}\left(F(y^*) - F(y)\right),
\]
which captures how the algorithm steadily closes the gap to optimality as the budget is allocated.

Solving this inequality over the interval $[0,k]$ yields
\[
F(y) \ge \left(1 - \frac{1}{e} \right) F(y^*),
\]
establishing the desired $(1 - 1/e)$ approximation guarantee. This matches the classical bound for submodular maximization under knapsack constraints and demonstrates that the proposed algorithm retains strong performance despite the presence of heterogeneous user responses and incentive costs.

\end{proof}

\vspace{1em}
\noindent\textbf{Time Complexity.}
Each iteration of Algorithm~\ref{alg:discrete-greedy-general} selects one user, and thus the procedure runs for at most $|V|$ iterations, where $V$ denotes the set of users in the social network. In each iteration, the algorithm evaluates the marginal influence gain for up to $|V|$ candidate users, each requiring a call to $\sigma(\cdot)$, which can be estimated via Monte Carlo simulation in time $T$. Therefore, the overall time complexity is $O(|V|^2 T)$, matching the complexity of standard greedy methods for influence maximization and enabling scalable implementation for large-scale marketing campaigns.
\vspace{.1in}
\section{Experiments}
\label{sec:experiments}


In this section, we pursue two complementary objectives. First, we evaluate the scalability and effectiveness of the proposed greedy approximation algorithm on large real-world networks spanning multiple application domains. Second, we demonstrate how the nested solutions generated by the algorithm can be leveraged to support budget-selection decisions under varying cost-benefit tradeoffs. Together, these experiments assess both the algorithmic performance of the proposed method and its value as a decision-support tool for influence-based resource allocation.

\subsection{Experimental Setup}

We begin by describing the datasets and preprocessing steps used in our experiments.

\subsubsection{Network Data}
\label{sec:networks}

\begin{table}[h!]
\centering
\begin{tabular}{lrrrr}
\toprule
Network Name & Type & Node Count & Edge Count \\
\midrule
Wikipedia \citep{leskovec2010signed,leskovec2010predicting} & Directed & 7,115   & 201,524  \\
Facebook \citep{leskovec2012learning}  & Undirected & 4,039   & 176,468  \\
Deezer \citep{feather}   & Undirected & 28,281  & 185,504  \\
DBLP  \citep{YangL2012}    & Undirected & 317,080 & 2,099,732  \\
Amazon  \citep{YangL2012}  & Undirected & 334,863 & 1,851,744 \\
\bottomrule
\end{tabular}
\caption{Basic information about each network. The data was retrieved from \citet{snapnets}.}
\label{tab:network-info}
\end{table}

We evaluate our algorithm on five large-scale real-world networks sourced from \citet{snapnets}, each representing a distinct domain of social or economic interaction. Table~\ref{tab:network-info} summarizes their structural properties.
These networks were selected because they represent five distinct influence-allocation settings: governance and voting (Wikipedia), social interaction (Facebook), digital content consumption (Deezer), knowledge diffusion (DBLP), and product adoption (Amazon). Together, they allow us to evaluate whether the proposed algorithm generalizes across fundamentally different behavioral and economic environments.

\begin{enumerate}
    \item \textbf{Wikipedia (Directed):} A voting network where each edge represents a vote cast by one user for another in Wikipedia’s administrator elections. This graph models directional influence and is relevant for studying trust, reputation systems, and decision-making dynamics in online communities.

    \item \textbf{Facebook (Undirected):} A social network of user friendships. As a prototypical example of peer-to-peer influence, this dataset is widely used in studies of viral marketing, information diffusion, and community detection.

    \item \textbf{Deezer (Undirected):} A music streaming platform network where edges represent social connections or shared listening behavior. This network is useful for evaluating influence-based recommendation systems and incentive mechanisms in digital entertainment platforms.

    \item \textbf{DBLP (Undirected):} A co-authorship network in computer science, where edges indicate joint publications. It serves as a model for academic collaboration and the diffusion of ideas, making it relevant for incentive design in research communities.

    \item \textbf{Amazon (Undirected):} A product co-purchase network, where edges connect items frequently bought together. This network is highly applicable to e-commerce, especially for optimizing bundling strategies, targeted promotions, and influence-aware discount allocation.
\end{enumerate}

The diversity of these networks—in terms of size, structure, and domain—allows us to assess the scalability and robustness of our algorithm across different real-world scenarios. Each network presents unique challenges for influence maximization, such as varying degrees of sparsity, clustering, and centrality, which we explore in subsequent sections.

To ensure consistency in influence modeling, we converted all undirected networks into directed graphs by replacing each undirected edge with two directed edges. For edge weights, we adopted the \textit{weighted cascade model} \citep{kempe2003maximizing}, where for each node \( v \in V \), the weight of each incoming edge is set to \( 1/\text{in-degree}(v) \). This normalization ensures that the total incoming influence to any node is bounded by 1, aligning with standard assumptions in influence maximization literature.

We used the \texttt{Pooch} library \citep{Uieda2020} for automated data retrieval and the \texttt{NetworkX} library \citep{networkx} for graph preprocessing and manipulation.

\subsubsection{Experimental Details}
\label{sec:experimental-details}

A central feature of our framework is that different individuals may respond differently to incentives. In practice, customers, users, and agents rarely exhibit identical responsiveness or activation costs. Accordingly, our experimental design incorporates heterogeneous activation and cost parameters.
To simulate realistic incentive allocation scenarios, we randomly assigned each node \( v \in V \) a linear activation function \( f_v(y_v) = a_v y_v + b_v \), where the coefficients \( a_v \), \( b_v \), and cost weights \( w_v \) were drawn uniformly at random from the sets \( A = \{0.5, 1\} \), \( B = \{0, 0.2\} \), and \( W = \{0.5, 1\} \), respectively. These choices reflect heterogeneous agent responsiveness and cost structures, which are common in managerial settings such as marketing campaigns, subsidy allocation, or targeted interventions. For example, some individuals may require higher incentives to reach full activation (low \( a_v \)), while others may be more responsive even with minimal investment (high \( a_v \), low \( b_v \)).

We set the total budget to \( K = 10.0 \), representing a constrained resource environment typical in managerial decision-making. This budget level allows us to evaluate how effectively the algorithm prioritizes high-impact nodes under limited resources.

To estimate the expected influence spread resulting from each discount vector \( y \), we performed 10,000 Monte Carlo simulations per experiment. These simulations model stochastic diffusion processes, capturing the uncertainty inherent in real-world influence propagation.

All experiments were conducted on a workstation equipped with a 3.00 GHz 8-core Intel Xeon E5-1660v3 CPU and 64 GB of RAM. The implementation was written in Python, using the \texttt{CyNetDiff} library \citep{robson2024cynetdiff} for efficient simulation of influence diffusion. To improve runtime performance, we incorporated a priority queue optimization inspired by the Cost-Effective Lazy Forward (CELF) algorithm \citep{celf-07}, which reduces redundant evaluations when selecting the node with the highest marginal gain in step~\ref{alg:step:choose-max} of Algorithm~\ref{alg:discrete-greedy-general}.

\subsection{Greedy Approximation}
\label{sec:greedy-approximation}

We implement Algorithm~\ref{alg:discrete-greedy-general} in Python across all networks described in Section~\ref{sec:networks}. As a baseline for comparison, we use a generalized version of the degree discount heuristic from \citet{chen2009efficient}, where the degree discount score \( dd_v \) is divided by the activation cost \( y_v w_v \). This baseline is chosen for its simplicity and speed, and because degree-based heuristics are widely used in practice due to their intuitive appeal: high-degree nodes often represent popular or central individuals, making them natural targets for influence campaigns.

Figure~\ref{fig:influence_vs_budget} shows the influence spread (estimated via 10,000 Monte Carlo simulations) on the \( y \)-axis as a function of budget usage on the \( x \)-axis. The budget values correspond to the cumulative cost of the selected nodes at each iteration of the algorithm, up to the maximum budget of \( K = 10.0 \). Similarly, Figure~\ref{fig:runtime_vs_budget} presents the runtime (on the $y$-axis) performance of the algorithm as a function of budget (on the $x$-axis).

\begin{figure*}[!htbp]
\centering
\begin{minipage}{.49\textwidth}
  \centering
  \resizebox{\textwidth}{!}{
\begin{tikzpicture}

\definecolor{color0}{rgb}{0.12156862745098,0.466666666666667,0.705882352941177}
\definecolor{color1}{rgb}{1,0.498039215686275,0.0549019607843137}
\definecolor{color2}{rgb}{0.172549019607843,0.627450980392157,0.172549019607843}
\definecolor{color3}{rgb}{0.83921568627451,0.152941176470588,0.156862745098039}
\definecolor{color4}{rgb}{0.580392156862745,0.403921568627451,0.741176470588235}

\begin{axis}[
legend cell align={left},
legend style={fill opacity=0.8, draw opacity=1, text opacity=1, at={(0.30,0.23)}, anchor=north west, draw=white!80!black},
tick align=outside,
tick pos=left,
x grid style={white!69.0196078431373!black},
xlabel={\normalsize Budget},
xmajorgrids,
xmin=-1, xmax=11,
xtick style={color=black},
y grid style={white!69.0196078431373!black},
ylabel={\normalsize Influence},
ymajorgrids,
ymin=-60.401, ymax=2300,
ytick style={color=black}
]

\addplot [thick, only marks, color3, mark=*, mark size=.75, mark options={solid}]
table {%
0.4 138.7526
0.8 274.9891
1.3 419.2109
1.7 534.862
2.1 626.562
3.7 1070.6316
4.2 1165.4736
4.7 1268.8903
5.1 1350.1311
5.5 1421.0731
6.3 1552.7846
6.8 1619.4964
7.2 1675.8072
7.6 1747.3063
8.0 1820.7069
8.4 1872.3978
9.2 1930.2387
9.7 2016.6622
10.0 2043.9987
};
\addlegendentry{Generalized Degree Discount}

\addplot [thick, only marks, color2, mark=*, mark size=1, mark options={solid}]
table {%
0.5 146.6172
0.9 285.3846
1.3 417.9661
1.7 538.4138
3.3 977.9737
3.8 1085.0664
4.8 1278.3486
5.3 1374.5131
5.8 1468.8023
6.2 1558.0135
6.7 1648.1931
7.1 1727.9635
7.6 1801.722
8.1 1877.3709
8.5 1948.9052
8.9 2018.2862
9.3 2093.2247
9.7 2157.8745
10.0 2200.4881
};
\addlegendentry{Generalized Discrete Greedy}
\end{axis}

\end{tikzpicture}}
  \subcaption{Amazon network}
  \label{fig:influence_vs_budget_amazon}
\end{minipage}\hfill
\begin{minipage}{.49\textwidth}
  \centering
  \resizebox{\textwidth}{!}{
\begin{tikzpicture}

\definecolor{color0}{rgb}{0.12156862745098,0.466666666666667,0.705882352941177}
\definecolor{color1}{rgb}{1,0.498039215686275,0.0549019607843137}
\definecolor{color2}{rgb}{0.172549019607843,0.627450980392157,0.172549019607843}
\definecolor{color3}{rgb}{0.83921568627451,0.152941176470588,0.156862745098039}
\definecolor{color4}{rgb}{0.580392156862745,0.403921568627451,0.741176470588235}

\begin{axis}[
legend cell align={left},
legend style={fill opacity=0.8, draw opacity=1, text opacity=1, at={(0.30,0.23)}, anchor=north west, draw=white!80!black},
tick align=outside,
tick pos=left,
x grid style={white!69.0196078431373!black},
xlabel={\normalsize Budget},
xmajorgrids,
xmin=-1, xmax=11,
xtick style={color=black},
y grid style={white!69.0196078431373!black},
ylabel={\normalsize Influence},
ymajorgrids,
ymin=-60.401, ymax=2300,
ytick style={color=black}
]

\addplot [thick, only marks, color3, mark=*, mark size=.75, mark options={solid}]
table {%
0.4 116.5343
0.9 233.198
1.3 376.0986
1.8 490.1933
2.2 582.2295
2.6 683.1858
3.1 790.8291
3.5 882.7664
3.9 927.7811
4.3 1022.7655
4.7 1080.6718
5.1 1116.0098
5.5 1197.314
5.9 1258.6209
6.3 1299.4008
6.8 1396.7307
7.2 1451.421
7.7 1554.82
8.2 1661.3223
8.6 1749.4087
9.1 1843.8982
9.5 1929.7159
10.0 2021.8365
};
\addlegendentry{Generalized Degree Discount}

\addplot [thick, only marks, color2, mark=*, mark size=1, mark options={solid}]
table {%
0.4 143.1265
0.9 262.0574
1.3 376.6702
1.8 489.8562
2.3 602.5764
2.8 713.6026
3.3 825.9009
3.8 934.7915
4.3 1039.1707
4.8 1140.202
5.3 1239.6765
5.7 1336.4168
6.2 1426.4982
6.6 1512.3404
7.0 1598.0832
7.4 1686.1458
7.9 1766.9617
8.3 1853.8528
8.7 1932.9915
9.1 2009.1861
9.5 2098.5828
10.0 2197.0264
};
\addlegendentry{Generalized Discrete Greedy}
\end{axis}

\end{tikzpicture}}
  \subcaption{DBLP network}
  \label{fig:influence_vs_budget_dblp}
\end{minipage}\hfill
\begin{minipage}{.49\textwidth}
  \centering
  \resizebox{\textwidth}{!}{
\begin{tikzpicture}

\definecolor{color0}{rgb}{0.12156862745098,0.466666666666667,0.705882352941177}
\definecolor{color1}{rgb}{1,0.498039215686275,0.0549019607843137}
\definecolor{color2}{rgb}{0.172549019607843,0.627450980392157,0.172549019607843}
\definecolor{color3}{rgb}{0.83921568627451,0.152941176470588,0.156862745098039}
\definecolor{color4}{rgb}{0.580392156862745,0.403921568627451,0.741176470588235}

\begin{axis}[
legend cell align={left},
legend style={fill opacity=0.8, draw opacity=1, text opacity=1, at={(0.30,0.23)}, anchor=north west, draw=white!80!black},
tick align=outside,
tick pos=left,
x grid style={white!69.0196078431373!black},
xlabel={\normalsize Budget},
xmajorgrids,
xmin=-1, xmax=11,
xtick style={color=black},
y grid style={white!69.0196078431373!black},
ylabel={\normalsize Influence},
ymajorgrids,
ymin=-60.401, ymax=1200,
ytick style={color=black}
]

\addplot [thick, only marks, color3, mark=*, mark size=.75, mark options={solid}]
table {%
0.5 141.9324
1.3 263.0296
1.7 319.2145
2.2 383.9258
2.7 457.8356
3.2 525.1309
3.7 581.1807
4.1 621.9981
4.5 672.0293
5.0 722.9947
5.4 760.8912
6.2 839.3634
7.2 913.1695
7.6 948.9841
8.0 979.9733
8.4 1010.7306
8.8 1047.2238
9.3 1089.779
9.7 1123.3113
10.0 1146.7528
};
\addlegendentry{Generalized Degree Discount}

\addplot [thick, only marks, color2, mark=*, mark size=1, mark options={solid}]
table {%
0.5 141.9324
1.0 217.023
1.5 284.7011
2.0 349.6252
2.5 415.8783
2.9 475.5016
3.4 526.5829
4.2 627.2666
4.6 678.8467
5.1 726.0949
5.6 767.6661
6.1 815.5133
6.6 853.7657
7.0 897.2886
7.5 934.5838
7.9 971.761
8.9 1047.2944
9.7 1122.0597
10.0 1151.2557
};
\addlegendentry{Generalized Discrete Greedy}
\end{axis}

\end{tikzpicture}}
  \subcaption{Deezer network}
  \label{fig:influence_vs_budget_deezer}
\end{minipage}\hfill \vspace{.2cm}
\begin{minipage}{.49\textwidth}
  \centering
  \resizebox{\textwidth}{!}{
\begin{tikzpicture}

\definecolor{color0}{rgb}{0.12156862745098,0.466666666666667,0.705882352941177}
\definecolor{color1}{rgb}{1,0.498039215686275,0.0549019607843137}
\definecolor{color2}{rgb}{0.172549019607843,0.627450980392157,0.172549019607843}
\definecolor{color3}{rgb}{0.83921568627451,0.152941176470588,0.156862745098039}
\definecolor{color4}{rgb}{0.580392156862745,0.403921568627451,0.741176470588235}

\begin{axis}[
legend cell align={left},
legend style={fill opacity=0.8, draw opacity=1, text opacity=1, at={(0.30,0.23)}, anchor=north west, draw=white!80!black},
tick align=outside,
tick pos=left,
x grid style={white!69.0196078431373!black},
xlabel={\normalsize Budget},
xmajorgrids,
xmin=-1, xmax=11,
xtick style={color=black},
y grid style={white!69.0196078431373!black},
ylabel={\normalsize Influence},
ymajorgrids,
ymin=-20.401, ymax=1000,
ytick style={color=black}
]

\addplot [thick, only marks, color3, mark=*, mark size=.75, mark options={solid}]
table {%
1.0 190.9579
1.8 343.5229
2.8 449.3325
3.3 559.3073
3.7 585.5215
4.7 730.6466
5.1 740.9732
5.5 755.9485
5.9 763.8243
6.4 776.3716
6.8 784.7199
7.2 792.4851
7.6 804.5672
8.0 809.0769
8.4 814.9213
8.8 822.7007
9.2 827.5121
9.6 870.902
10.0 878.8188
};
\addlegendentry{Generalized Degree Discount}

\addplot [thick, only marks, color2, mark=*, mark size=1, mark options={solid}]
table {%
0.5 111.9272
1.5 300.8041
2.3 452.0764
3.3 597.8661
4.3 704.33
4.7 748.2278
5.1 773.5452
5.9 817.2961
6.7 846.8628
7.2 861.0002
8.2 888.3838
8.7 901.6435
9.1 913.1884
9.5 924.655
9.9 936.3339
10.0 940.3371
};
\addlegendentry{Generalized Discrete Greedy}
\end{axis}

\end{tikzpicture}}
  \subcaption{Facebook network}
  \label{fig:influence_vs_budget_facebook}
\end{minipage}\hfill
\begin{minipage}{.49\textwidth}
  \centering
  \resizebox{\textwidth}{!}{
\begin{tikzpicture}

\definecolor{color0}{rgb}{0.12156862745098,0.466666666666667,0.705882352941177}
\definecolor{color1}{rgb}{1,0.498039215686275,0.0549019607843137}
\definecolor{color2}{rgb}{0.172549019607843,0.627450980392157,0.172549019607843}
\definecolor{color3}{rgb}{0.83921568627451,0.152941176470588,0.156862745098039}
\definecolor{color4}{rgb}{0.580392156862745,0.403921568627451,0.741176470588235}

\begin{axis}[
legend cell align={left},
legend style={fill opacity=0.8, draw opacity=1, text opacity=1, at={(0.30,0.23)}, anchor=north west, draw=white!80!black},
tick align=outside,
tick pos=left,
x grid style={white!69.0196078431373!black},
xlabel={\normalsize Budget},
xmajorgrids,
xmin=-1, xmax=11,
xtick style={color=black},
y grid style={white!69.0196078431373!black},
ylabel={\normalsize Influence},
ymajorgrids,
ymin=-20.401, ymax=1800,
ytick style={color=black}
]

\addplot [thick, only marks, color3, mark=*, mark size=.75, mark options={solid}]
table {%
0.5 306.4505
1.0 525.1954
1.8 771.4566
2.2 894.091
2.6 963.9116
3.0 1022.0489
3.4 1079.4412
3.8 1130.6578
4.3 1195.1802
4.7 1240.9265
5.5 1318.7732
5.9 1347.471
6.3 1394.9391
7.1 1449.4469
7.6 1487.4051
8.1 1520.2967
8.6 1554.4086
9.1 1593.5518
9.5 1614.0678
10.0 1641.5855
};
\addlegendentry{Generalized Degree Discount}

\addplot [thick, only marks, color2, mark=*, mark size=1, mark options={solid}]
table {%
0.5 306.4505
1.0 525.1954
1.4 678.5509
2.2 891.7527
2.7 974.2872
3.1 1038.5718
3.6 1092.7576
4.1 1151.7668
4.5 1203.4941
4.9 1248.9055
5.3 1294.8498
5.7 1336.5875
6.1 1374.3027
6.6 1414.5996
7.1 1450.1435
7.5 1486.3352
8.3 1544.5418
8.8 1576.1541
9.6 1626.7416
10.0 1664.0846
};
\addlegendentry{Generalized Discrete Greedy}
\end{axis}

\end{tikzpicture}}
  \subcaption{Wikipedia network}
  \label{fig:influence_vs_budget_wikipedia}
\end{minipage} \vspace{.1in}
\caption{Influence vs. Budget ($K$) for different networks. 
}
\label{fig:influence_vs_budget}
\end{figure*}

\begin{figure*}[!htbp]
\centering
\begin{minipage}{.49\textwidth}
  \centering
  \resizebox{\textwidth}{!}{
\begin{tikzpicture}

\definecolor{color0}{rgb}{0.12156862745098,0.466666666666667,0.705882352941177}
\definecolor{color1}{rgb}{1,0.498039215686275,0.0549019607843137}
\definecolor{color2}{rgb}{0.172549019607843,0.627450980392157,0.172549019607843}
\definecolor{color3}{rgb}{0.83921568627451,0.152941176470588,0.156862745098039}
\definecolor{color4}{rgb}{0.580392156862745,0.403921568627451,0.741176470588235}

\begin{axis}[
legend cell align={left},
legend style={
  fill opacity=0.8,
  draw opacity=1,
  text opacity=1,
  at={(0.01,0.60)},
  anchor=north west,
  draw=white!80!black
},
log basis y={10},
tick align=outside,
tick pos=left,
x grid style={white!69.0196078431373!black},
xlabel={\normalsize Budget},
xmajorgrids,
xmin=-1, xmax=11,
xtick style={color=black},
y grid style={white!69.0196078431373!black},
ylabel={\normalsize Runtime},
ymajorgrids,
ymin=-20, ymax=250,
ytick style={color=black}
]

\addplot [thick, only marks, color3, mark=*, mark size=.75, mark options={solid}]
table {%
0.4 2.2325717310013715
0.8 2.234879935000208
1.3 2.2370541669952217
1.7 2.239544969997951
2.1 2.2412173560005613
3.7 2.242619406999438
4.2 2.2489295709965518
4.7 2.250537704996532
5.1 2.252096601005178
5.5 2.25345132300572
6.3 2.2547627379972255
6.8 2.257213036995381
7.2 2.2586830189975444
7.6 2.2599542300013127
8.0 2.2611440680047963
8.4 2.2623435290006455
9.2 2.263534940997488
9.7 2.265884770997218
10.0 2.2673209000058705
};
\addlegendentry{Generalized Degree Discount}

\addplot [thick, only marks, color2, mark=*, mark size=1, mark options={solid}]
table {%
0.5 208.93439140700502
0.9 208.97764774999814
1.3 209.0423374969978
1.7 209.12926917499863
3.3 209.27603863300465
3.8 209.44248785100353
4.8 210.0312739719957
5.3 210.25493404400186
5.8 210.48909080600424
6.2 210.74563532900356
6.7 211.01144936100172
7.1 211.29310377499496
7.6 212.16869065299397
8.1 214.32558830799826
8.5 214.6427277259936
8.9 215.63208882999606
9.3 216.329580687001
9.7 218.93720283100265
10.0 219.73839917700388
};
\addlegendentry{Generalized Discrete Greedy}
\end{axis}

\end{tikzpicture}}
  \subcaption{Amazon network}
  \label{fig:runtime_vs_budget_Amazon}
\end{minipage}\hfill
\begin{minipage}{.49\textwidth}
  \centering
  \resizebox{\textwidth}{!}{
\begin{tikzpicture}

\definecolor{color0}{rgb}{0.12156862745098,0.466666666666667,0.705882352941177}
\definecolor{color1}{rgb}{1,0.498039215686275,0.0549019607843137}
\definecolor{color2}{rgb}{0.172549019607843,0.627450980392157,0.172549019607843}
\definecolor{color3}{rgb}{0.83921568627451,0.152941176470588,0.156862745098039}
\definecolor{color4}{rgb}{0.580392156862745,0.403921568627451,0.741176470588235}

\begin{axis}[
legend cell align={left},
legend style={
  fill opacity=0.8,
  draw opacity=1,
  text opacity=1,
  at={(0.01,0.60)},
  anchor=north west,
  draw=white!80!black
},
log basis y={10},
tick align=outside,
tick pos=left,
x grid style={white!69.0196078431373!black},
xlabel={\normalsize Budget},
xmajorgrids,
xmin=-1, xmax=11,
xtick style={color=black},
y grid style={white!69.0196078431373!black},
ylabel={\normalsize Runtime},
ymajorgrids,
ymin=-20, ymax=300,
ytick style={color=black}
]

\addplot [thick, only marks, color3, mark=*, mark size=.75, mark options={solid}]
table {%
0.4 2.192481383011909
0.9 2.194915377011057
1.3 2.1979289070004597
1.8 2.200297343006241
2.2 2.202471239012084
2.6 2.204479496998829
3.1 2.207166773005156
3.5 2.2093365760083543
3.9 2.2108871689997613
4.3 2.212610804999713
4.7 2.214234099999885
5.1 2.215840447010123
5.5 2.2171800720097963
5.9 2.218704972998239
6.3 2.22019533799903
6.8 2.2216066510009114
7.2 2.2235359820042504
7.7 2.2249512569978833
8.2 2.2268513050075853
8.6 2.2286976920004236
9.1 2.2302351890102727
9.5 2.2320026850065915
10.0 2.2334806120052235
};
\addlegendentry{Generalized Degree Discount}

\addplot [thick, only marks, color2, mark=*, mark size=1, mark options={solid}]
table {%
0.4 248.89323503201012
0.9 249.01002006600902
1.3 249.09462065900152
1.8 249.6874783950043
2.3 249.85613092900894
2.8 250.04419579100795
3.3 250.27348011100548
3.8 250.7707109710027
4.3 251.48258150400943
4.8 251.76762331000646
5.3 252.05793206000817
5.7 255.5696720870037
6.2 256.3145575640083
6.6 257.06822782701056
7.0 258.72635837300913
7.4 259.18095814299886
7.9 260.1114929000032
8.3 261.8796031390084
8.7 264.07014085300034
9.1 266.042541599003
9.5 270.19685859201127
10.0 270.7678656250064
};
\addlegendentry{Generalized Discrete Greedy}
\end{axis}

\end{tikzpicture}}
  \subcaption{DBLP network}
  \label{fig:runtime_vs_budget_DBLP}
\end{minipage}\hfill
\begin{minipage}{.49\textwidth}
  \centering
  \resizebox{\textwidth}{!}{
\begin{tikzpicture}

\definecolor{color0}{rgb}{0.12156862745098,0.466666666666667,0.705882352941177}
\definecolor{color1}{rgb}{1,0.498039215686275,0.0549019607843137}
\definecolor{color2}{rgb}{0.172549019607843,0.627450980392157,0.172549019607843}
\definecolor{color3}{rgb}{0.83921568627451,0.152941176470588,0.156862745098039}
\definecolor{color4}{rgb}{0.580392156862745,0.403921568627451,0.741176470588235}

\begin{axis}[
legend cell align={left},
legend style={
  fill opacity=0.8,
  draw opacity=1,
  text opacity=1,
  at={(0.01,0.60)},
  anchor=north west,
  draw=white!80!black
},
log basis y={10},
tick align=outside,
tick pos=left,
x grid style={white!69.0196078431373!black},
xlabel={\normalsize Budget},
xmajorgrids,
xmin=-1, xmax=11,
xtick style={color=black},
y grid style={white!69.0196078431373!black},
ylabel={\normalsize Runtime},
ymajorgrids,
ymin=-10, ymax=45,
ytick style={color=black}
]

\addplot [thick, only marks, color3, mark=*, mark size=.75, mark options={solid}]
table {%
0.5 0.12210151299950667
1.3 0.1237163970072288
1.7 0.12509476600098424
2.2 0.12577266999869607
2.7 0.126644457996008
3.2 0.12749502299993765
3.7 0.12828876299317926
4.1 0.12902203599514905
4.5 0.12968000299588311
5.0 0.13024321499688085
5.4 0.13098262599669397
6.2 0.13151271200331394
7.2 0.13255942400428466
7.6 0.13377671099442523
8.0 0.13432887400267646
8.4 0.13480124999477994
8.8 0.13532253600715194
9.3 0.13583428900165018
9.7 0.1364285559975542
10.0 0.13690578800742514
};
\addlegendentry{Generalized Degree Discount}

\addplot [thick, only marks, color2, mark=*, mark size=1, mark options={solid}]
table {%
0.5 27.952514258999145
1.0 27.9861977929977
1.5 28.07651995299966
2.0 28.18910582200624
2.5 28.401857801000006
2.9 28.7236892140063
3.4 28.987164133999613
4.2 29.18244457100809
4.6 29.296273605010356
5.1 31.09345271899656
5.6 31.83488687699719
6.1 32.35885288999998
6.6 33.31974787900981
7.0 34.32321976601088
7.5 34.47609730200202
7.9 35.42564212200523
8.9 38.206328051004675
9.7 40.33126465900568
10.0 40.71220151100715
};
\addlegendentry{Generalized Discrete Greedy}
\end{axis}

\end{tikzpicture}}
  \subcaption{Deezer network}
  \label{fig:runtime_vs_budget_Deezer}
\end{minipage}\hfill \vspace{.2cm}
\begin{minipage}{.49\textwidth}
  \centering
  \resizebox{\textwidth}{!}{
\begin{tikzpicture}

\definecolor{color0}{rgb}{0.12156862745098,0.466666666666667,0.705882352941177}
\definecolor{color1}{rgb}{1,0.498039215686275,0.0549019607843137}
\definecolor{color2}{rgb}{0.172549019607843,0.627450980392157,0.172549019607843}
\definecolor{color3}{rgb}{0.83921568627451,0.152941176470588,0.156862745098039}
\definecolor{color4}{rgb}{0.580392156862745,0.403921568627451,0.741176470588235}

\begin{axis}[
legend cell align={left},
legend style={
  fill opacity=0.8,
  draw opacity=1,
  text opacity=1,
  at={(0.01,0.89)},
  anchor=north west,
  draw=white!80!black
},
log basis y={10},
tick align=outside,
tick pos=left,
x grid style={white!69.0196078431373!black},
xlabel={\normalsize Budget},
xmajorgrids,
xmin=-1, xmax=11,
xtick style={color=black},
y grid style={white!69.0196078431373!black},
ylabel={\normalsize Runtime},
ymajorgrids,
ymin=-10, ymax=150,
ytick style={color=black}
]

\addplot [thick, only marks, color3, mark=*, mark size=.75, mark options={solid}]
table {%
1.0 0.017676257994025946
1.8 0.02594061200215947
2.8 0.03208066300430801
3.3 0.038051176990848035
3.7 0.040858747001038864
4.7 0.04266807599924505
5.1 0.04683493399352301
5.5 0.04845420300262049
5.9 0.05017695500282571
6.4 0.051773707004031166
6.8 0.05382792599266395
7.2 0.055450735992053524
7.6 0.057052018004469573
8.0 0.05859377399610821
8.4 0.06017344999418128
8.8 0.06168924599478487
9.2 0.0631384049920598
9.6 0.06465276300150435
10.0 0.0660179539991077
};
\addlegendentry{Generalized Degree Discount}

\addplot [thick, only marks, color2, mark=*, mark size=1, mark options={solid}]
table {%
0.5 14.454121381000732
1.5 14.546448990004137
2.3 14.690919989006943
3.3 14.870818046998465
4.3 15.10462761000963
4.7 15.592257338998024
5.1 17.37628255900927
5.9 23.14389065100113
6.7 59.16128571200534
7.2 59.92711412500648
8.2 66.84973878100573
8.7 96.43239662901033
9.1 96.70534216500528
9.5 99.98782677500276
9.9 120.6707146550034
10.0 133.08097400399856
};
\addlegendentry{Generalized Discrete Greedy}
\end{axis}

\end{tikzpicture}}
  \subcaption{Facebook network}
  \label{fig:runtime_vs_budget_Facebook}
\end{minipage}\hfill
\begin{minipage}{.49\textwidth}
  \centering
  \resizebox{\textwidth}{!}{
\begin{tikzpicture}

\definecolor{color0}{rgb}{0.12156862745098,0.466666666666667,0.705882352941177}
\definecolor{color1}{rgb}{1,0.498039215686275,0.0549019607843137}
\definecolor{color2}{rgb}{0.172549019607843,0.627450980392157,0.172549019607843}
\definecolor{color3}{rgb}{0.83921568627451,0.152941176470588,0.156862745098039}
\definecolor{color4}{rgb}{0.580392156862745,0.403921568627451,0.741176470588235}

\begin{axis}[
legend cell align={left},
legend style={
  fill opacity=0.8,
  draw opacity=1,
  text opacity=1,
  at={(0.01,0.89)},
  anchor=north west,
  draw=white!80!black
},
log basis y={10},
tick align=outside,
tick pos=left,
x grid style={white!69.0196078431373!black},
xlabel={\normalsize Budget},
xmajorgrids,
xmin=-1, xmax=11,
xtick style={color=black},
y grid style={white!69.0196078431373!black},
ylabel={\normalsize Runtime},
ymajorgrids,
ymin=-10, ymax=350,
ytick style={color=black}
]

\addplot [thick, only marks, color3, mark=*, mark size=.75, mark options={solid}]
table {%
0.5 0.030454884996288456
1.0 0.036734220004291274
1.8 0.042493971006479114
2.2 0.051016395998885855
2.6 0.05475395200483035
3.0 0.05817162900348194
3.4 0.06130794399359729
3.8 0.06451164699683432
4.3 0.06761848299356643
4.7 0.07142068599932827
5.5 0.0742737069958821
5.9 0.07957963900116738
6.3 0.08221492200391367
7.1 0.08458437799708918
7.6 0.0893422700028168
8.1 0.09230758600460831
8.6 0.09517254499951378
9.1 0.09789409300719853
9.5 0.10045884099963587
10.0 0.10242025200568605
};
\addlegendentry{Generalized Degree Discount}

\addplot [thick, only marks, color2, mark=*, mark size=1, mark options={solid}]
table {%
0.5 28.302392787998542
1.0 28.724055031998432
1.4 31.09117438699468
2.2 36.14526251000643
2.7 47.14117938300478
3.1 67.91225550600211
3.6 73.7577517220052
4.1 77.22335173700412
4.5 92.08463623300486
4.9 98.17050827700587
5.3 124.88954313099384
5.7 131.0519708770007
6.1 157.97586084999784
6.6 160.38594619900687
7.1 188.93136124000011
7.5 192.27463532499678
8.3 211.89284415599832
8.8 258.41033494099975
9.6 299.02162606900674
10.0 321.3704178140033
};
\addlegendentry{Generalized Discrete Greedy}
\end{axis}

\end{tikzpicture}}
  \subcaption{Wikipedia network}
  \label{fig:runtime_vs_budget_Wikipedia}
\end{minipage} \vspace{.1in}

\caption{Runtime (in seconds) vs. Budget ($K$) for different networks. 
}
\label{fig:runtime_vs_budget}
\end{figure*}

These results generate several managerial insights:
\begin{itemize}
    \item \textbf{Strategic Resource Allocation:} The greedy algorithm provides a principled way to allocate limited incentives across a population to maximize influence. This is directly applicable to domains such as marketing (e.g., targeting influential customers), public health (e.g., vaccine outreach), and platform growth (e.g., onboarding key users in a network).

    \item \textbf{Scalability and Efficiency:} The use of CELF-style optimizations \citep{celf-07} ensures that the algorithm remains computationally tractable even on large-scale networks. This is crucial for managers who need to make timely decisions under resource constraints.

    \item \textbf{Performance vs. Heuristics:} Compared to the generalized degree discount heuristic, our algorithm consistently achieves higher influence, especially at larger budget levels. This demonstrates the value of using theoretically grounded methods over simple heuristics, particularly when the stakes of resource allocation are high.

    \item \textbf{Transparency and Interpretability:} The greedy algorithm’s step-by-step selection process offers transparency in decision-making, allowing managers to justify why certain individuals or nodes were prioritized for incentives.

\end{itemize}

Overall, our framework bridges the gap between algorithmic rigor and managerial applicability, offering a scalable and interpretable solution for influence maximization under budget constraints.

\subsection{Profit-Optimized Budget Selection}

The experiments thus far have focused on influence maximization under a fixed budget. In practice, however, managers often face the inverse problem: determining how much budget should be allocated in the first place. This requires balancing the additional influence generated by larger budgets against the corresponding increase in incentive costs. Because our greedy algorithm produces a nested sequence of solutions, the resulting influence estimates can be reused across many candidate budget levels, enabling efficient profit-based decision-making.

In real-world marketing campaigns, the goal is not merely to maximize influence but to do so in a way that maximizes profit. Each unit of budget spent on incentives incurs a cost, while each successful activation generates revenue. The challenge for marketers is to determine the optimal budget allocation that balances these two forces—maximizing reach without overspending.

Let $p \ge 0$ denote the price per unit of the product being promoted, and $c \ge 0$ the cost of sponsoring one unit. Then, the profit $\pi(t)$ from sponsoring $t$ units is given by:
\begin{align}
    \text{total profit} &= \text{total revenue} - \text{total cost of sponsorship}, \nonumber\\
    &= \text{price per unit} \times \text{influence} - \text{cost per unit} \times \text{budget}, \nonumber\\
    \pi(t) &= p\tilde{\sigma}(\mathbf{x}(t)) - ct
\end{align}

Here, $\tilde{\sigma}(\mathbf{x}(t))$ represents the expected influence achieved by incentivizing $t$ users, selected using our greedy algorithm.



To identify the budget \(t\) that maximizes profit \(\pi(t)\), we use \textit{ternary search}. Ternary search is a standard optimization procedure for unimodal functions, i.e., functions that increase up to a single maximum and then decrease. At each iteration, the method evaluates two interior points within the current budget interval and eliminates the region that cannot contain the optimum. By repeatedly shrinking the search interval, ternary search efficiently identifies the profit-maximizing budget.


Importantly, our greedy algorithm records the order in which nodes are selected, allowing influence to be evaluated incrementally and avoiding recomputation. Once the influence dictionary has been constructed, managers can rapidly evaluate alternative pricing and cost scenarios without rerunning the underlying influence-maximization algorithm. This effectively separates the computationally intensive influence-maximization step from the subsequent budget-selection problem, making profit optimization computationally inexpensive.

\begin{table}[h]
\centering
\begin{tabular}{lrrrrr}
\toprule
Graph & Budget & Payoff & Search Time & Initial Computation Time \\
\midrule
Wikipedia & 2.82  & 275.54 & 63.21 & 212.30 \\
Facebook  & 1.58  &  45.57 & 25.94 &  61.99 \\
Deezer    & 0.59  &  29.90 &  9.12 &  39.34 \\
DBLP      & 7.65  & 250.41 & 75.13 & 265.84 \\
Amazon    & 7.14  & 295.35 & 60.50 & 219.26 \\
\bottomrule
\end{tabular}
\caption{Results of optimal budget selection across networks. Experiments used a maximum budget of $K = 8.00$, price per unit $p = 0.5$, cost per unit $c = 80.0$, and $\epsilon = 0.1$.}
\label{tab:graph-budget-payoff-120}
\end{table}


\Cref{tab:graph-budget-payoff-120} presents the results of our profit-optimized budget selection experiments across five real-world networks: Wikipedia, Facebook, Deezer, DBLP, and Amazon. For each graph, we report four key metrics:
\begin{enumerate}
    \item Optimal Budget: The budget value that maximizes profit under the given pricing and cost parameters.
    \item Payoff: The resulting profit achieved at the optimal budget.
    \item Search Time: The time taken to perform a ternary search over the budget space to identify the profit-maximizing point.
    \item Initial Computation Time: The time required to compute the influence dictionary up to the maximum budget $K = 8.00$.
\end{enumerate}

These experiments were conducted using a fixed price per unit of $p = 0.5$, a cost per unit of $c = 80.0$, and a fractional evaluation precision of $\epsilon = 0.1$.
The results reveal several important trends:
\begin{itemize}
    \item \textbf{Efficiency of Optimization:} Across all networks, the ternary search procedure is significantly faster than the initial influence computation. For example, in the Wikipedia network, the search time (63.21 seconds) is less than one-third of the initial computation time (212.30 seconds). This demonstrates that once the influence landscape is computed, profit optimization becomes highly efficient and scalable.

    \item \textbf{Network-Specific Budgets:} The optimal budget varies significantly across networks, reflecting differences in network structure and user responsiveness. For instance, Deezer achieves its maximum profit with a budget of just 0.59, while DBLP requires a much larger budget of 7.65. This highlights the importance of tailoring budget strategies to the underlying network characteristics.

    \item \textbf{Profit Sensitivity:} The payoff values also vary widely, with Amazon yielding the highest profit (295.35) and Deezer the lowest (29.90). These differences underscore the impact of network topology and influence dynamics on campaign profitability.

    \item \textbf{Managerial Implication:} From a business perspective, these results validate the practicality of our framework. Marketing managers can precompute influence dictionaries for their target networks and then rapidly evaluate optimal budget levels under different pricing and cost scenarios. This enables data-driven decision-making and supports agile campaign planning in dynamic environments.
\end{itemize}

Overall, this case study demonstrates that our framework not only provides theoretical guarantees and algorithmic efficiency but also delivers actionable insights for real-world marketing applications, offering a clear and actionable pathway from data to profit.


\vspace{.1in}
\section{Conclusion}
\label{sec:conclusion}

We conclude the paper by summarizing our contributions, discussing the framework’s industry readiness, outlining its limitations, and identifying promising directions for future research.

\subsection{Summary}
In this work, we addressed the problem of influence maximization using partial incentives under a generalized marketing model. We formally defined the Budgeted Generalized Fractional Influence Maximization (BGFIM) problem, which extends prior work by incorporating two key real-world factors: (i) a flexible affine relationship between the offered discount and a user's probability of activation, and (ii) a generalized budget constraint that accounts for heterogeneous costs in incentivizing different users.

To solve this more realistic and complex problem, we proposed an efficient greedy algorithm with a provable approximation guarantee of $\left(1 - 1/e\right)$ relative to the optimal solution. Through extensive experiments on five large-scale real-world networks, we demonstrated the algorithm's scalability and superior performance compared to common heuristics. Additionally, we showcased its utility in a profit-maximization setting, highlighting its relevance for strategic marketing decisions.

\subsection{Industry Readiness}
Our framework is well-positioned for real-world deployment in marketing and advertising platforms. The ability to model heterogeneous user costs and partial incentives aligns with how modern platforms operate—offering personalized discounts, targeting high-value users, and optimizing campaign budgets. The profit-optimized budget selection method enables marketers to make data-driven decisions that balance cost and reach, maximizing return on investment (ROI). Moreover, the reuse of influence computations across different pricing scenarios makes the approach scalable and cost-effective for dynamic business environments.

\subsection{Limitations}
While our model generalizes prior work, it still assumes a linear relationship between discount and activation probability. In practice, user behavior may follow more complex patterns—such as diminishing returns or sigmoid responses—where small discounts have little effect until a threshold is reached, after which influence grows rapidly and then plateaus. Our current framework does not capture such non-linear dynamics. Additionally, the cost of incentivizing users is treated as static, whereas in real-world settings, costs may evolve over time due to factors like user popularity 

\subsection{Future Steps}
Building on this foundation, several promising directions emerge. First, we aim to extend the model to support non-linear activation functions, such as convex or sigmoid relationships, which better reflect real-world consumer behavior. Second, we plan to explore dynamic cost models where the cost of incentivizing a user may change over time, enabling more adaptive campaign strategies. Finally, we are interested in developing online algorithms for influence maximization in uncertain environments, where activation probabilities are not known in advance and must be learned over time through exploration and feedback \citep{agarwal2022stochastic,nie2022explore, xu2026lofa}. These extensions will further enhance the applicability of our framework to real-time, data-driven marketing systems.

\vspace{.1in}
\section*{Acknowledgments} \label{sec:acknowledgments}
A preliminary version of this paper appeared at the 33rd ACM International Conference on Information and Knowledge Management (CIKM), 2024 \citep{BhimarajuRVU2024}. The present manuscript substantially extends and revises that earlier version.

\vspace{.1in}
\bibliographystyle{informs2014}
\bibliography{refs.bib}
\end{document}